\newtheorem{prethm}{{\bf  Theorem}}
\newenvironment{thm}{\begin{prethm}{\hspace{-0.5
               em}{\bf .}}}{\end{prethm}}
\newtheorem{prepro}{{\bf  Theorem}}
\newtheorem{precor}{{\bf  Corollary}}
\newenvironment{cor}{\begin{precor}{\hspace{-0.5
               em}{\bf .}}}{\end{precor}}
\newtheorem{preconj}{{\bf  Conjecture}}
\newtheorem{preremark}{{\bf  Remark}}
\newenvironment{remark}{\begin{preremark}{\hspace{-0.5
               em}{\bf .}}}{\end{preremark}}
\newtheorem{prelem}{{\bf  Lemma}}
\newtheorem{preproof}{{\bf  Proof.}}
\newenvironment{proof}[1]{\begin{preproof}{\rm
               #1}\hfill{$\Box$}}{\end{preproof}}
\date{}
\title{Even and Odd Cycles Passing a Given Edge or a Vertex} \author{\bf\small\sc S. Akbari$^{a,c}$, K. Etemadi$^{b}$, P. Ezzati$^{b}$, M.
Ghadiri$^{b}$ \\
{\footnotesize {\em $^{\rm a}$Department of Mathematical Sciences,
Sharif University of Technology,
 Tehran, Iran}}\\ {\footnotesize {\em $^{\rm b}$Department of
 Computer Engineering, Sharif University of Technology,
 Tehran, Iran}}\\
{\footnotesize {\em
$^{\rm c}$School of Mathematics, Institute
for Research in Fundamental Sciences (IPM),}} \\
 {\footnotesize {\em P.O. Box $19395-5746$, Tehran, Iran}}}
\begin{document}
\maketitle
\begin{abstract} {In this paper we provide some sufficient conditions for the existence of an odd or even cycle that passing a given vertex or an edge in $2$-connected or $2$-edge connected graphs. We provide some similar conditions for the existence of an odd or even circuit that passing a given vertex or an edge in 2-edge connected graphs. We show that if $G$ is a $2$-connected $k$-regular graph, $k \geq 3$, then every edge of $G$ is contained in an even cycle. We also prove that in a $2$-edge connected graph, if a vertex has odd degree, then there is an even cycle containing this vertex.}\\

{\noindent\it\bf 2010 Mathematics Subject Classification:} 05C38, 05C40.\\
{\it\bf Keywords and phrases:}  Even cycle, odd cycle, 2-connected graph, regular graph.
\end{abstract}
\section{Introduction}
{
Throughout this paper all graphs are simple with no loops and multiple edges. 
Let $G$ be a graph with vertex set and edge set $V(G)$ and $E(G)$,
respectively. If $v \in V(G)$, then $N(v)$ denotes the set of all neighbors of
$v$ and ${d}_{G}(v) = |N(v)|$ is called the \textit{degree} of $v$. If every vertex of $G$ has the same degree $k$, then $G$ is called a \textit{$k$-regular} graph.

A graph $G$ is said to be $k$-connected if it has more than $k$ vertices and
remains connected whenever fewer than $k$ vertices are removed, and is $k$-edge connected if it remains connected whenever
fewer than $k$ edges are removed.

A \textit{walk} is a sequence of vertices and edges ${v}_{0}, {e}_{1}, {v}_{1},
\ldots ,{e}_{k}, {v}_{k}$, such that for $i = 1, \ldots, k$, the edge ${e}_{i}$
has endpoints ${v}_{i-1}$ and ${v}_{i}$. A \textit{trial} is a walk with no
repeated edge. A \textit{circuit} is a trial that its endpoints are the same. Two paths are \textit{internally vertex disjoint} if they do not have any internal vertex in common. Let $C$ be a cycle in $G$ and $e = \{u, v\} \in E(G) \setminus
E(C)$. If $\{u, v\} \subseteq V(C)$, then $e$ is called a \textit{chord} of $C$.

There are some results on the existence of cycles passing a given subset of vertices or edges in a graph. In $1960$, Dirac proved that for every set of $k$ vertices in a $k$-connected graph there exists a cycle that passes through all vertices of the set, see  \cite{dir}. In $1977$, Woodall proved that given any $l$ disjoint edges in a $(2l-2)$-connected graph, $l \geq 2$, there is a cycle containing all of them, see \cite{woo}. In $1981$, Bondy and Lov\`{a}sz showed that if $S$ is a set of $k$ vertices in a $k$-connected graph $G$, $k \geq 3$, then there exists an even cycle in $G$ through every vertex of $S$, see \cite{bon}. In \cite{hag}, H\"{a}ggkvist and Thomassen showed that if $L$ is a set of $k$ independent edges in a graph $G$ such that any two vertices incident with $L$ are connected by $k+1$ internally disjoint paths, then $G$ has a cycle containing all edges of $L$. In this paper, we consider some conditions on $2$-connected graphs eventuating existence of cycles with different parity that passes a given vertex or a given edge.

Here, we prove that if $G$ is a $2$-connected $k$-regular graph, $k
\geq 3$,
then every edge of $G$ is contained in an even cycle. We also show that if $G$ is a $2$-edge
connected graph and $v$ is a vertex of odd degree in $G$, then $v$ is contained in an even cycle. Also, we prove that if $G$ is a $2$-connected non-bipartite graph, then every edge of $G$ is contained in an odd cycle. Finally, we show that if $G$ is a $2$-edge connected $k$-regular graph, $k \geq 3$, then every edge of $G$ is contained in an even circuit.
}

\section{Cycles in Graphs Passing a Given Vertex or an Edge} \label{even}
{We start this section with the following theorem.}

\begin{thm}\label{thm0} {Let $G$ be a $2$-connected graph and $C$ be an odd
cycle in G. Then every $e \in E(G) \setminus E(C)$ is contained in an even cycle.}
\end{thm}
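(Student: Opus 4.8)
The plan is to exploit the single structural feature of an odd cycle that drives every parity argument: if $x$ and $y$ are distinct vertices of $C$, then the two arcs of $C$ joining them have lengths summing to $|E(C)|$, an odd number, so these two arcs have \emph{opposite} parities. Consequently, whenever I can reach $C$ from the two ends of $e=\{u,v\}$ along disjoint paths that meet $C$ in two distinct vertices, I can close two cycles through $e$ whose lengths differ by an odd number, and one of the two must be even. I would organize the proof according to how many endpoints of $e$ lie on $C$.

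If $e$ is a chord of $C$ (both $u,v\in V(C)$), then $u$ and $v$ split $C$ into two arcs whose lengths sum to the odd number $|E(C)|$; the arc of \emph{odd} length together with $e$ forms a cycle of even length containing $e$, and we are done without any appeal to connectivity. If exactly one endpoint, say $u$, lies on $C$, I would use that $G-u$ is connected (a consequence of $2$-connectivity) to take a shortest path $R$ in $G-u$ from $v$ to the set $V(C)\setminus\{u\}$; being shortest, $R$ meets $C$ only in its final vertex $y\neq u$. Then the walk $v\,R\,y$, followed by either arc of $C$ from $y$ to $u$, followed by the edge $e$, yields two cycles through $e$ of opposite parity, one of which is even.

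The last case, where neither endpoint lies on $C$, is where the connectivity hypothesis does the real work. Here I would apply Menger's theorem to the pair $A=\{u,v\}$ and the set $B=V(C)$: for every single vertex $z$ the graph $G-z$ is connected, and both $A\setminus\{z\}$ and $B\setminus\{z\}$ are nonempty (this is where $|V(C)|\ge 3$ is used), so no single vertex separates $A$ from $B$ and there exist two vertex-disjoint $A$--$B$ paths $P$ and $Q$. Being disjoint, one of them starts at $u$ and the other at $v$; they end at distinct vertices $x,y\in V(C)$; and, as $A$--$B$ paths, their interiors avoid $C$. Concatenating $P$, an arc of $C$ from $x$ to $y$, the reverse of $Q$, and finally $e$, once more produces two cycles through $e$ of opposite parity.

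In every case one of the two constructed closed walks has even length, so it remains only to confirm it is a genuine (vertex-simple) cycle. The main obstacle is precisely the disjointness in the last case: the substance of the argument is arranging, via Menger, that the two paths from $u$ and $v$ to $C$ are vertex-disjoint and land at two \emph{distinct} vertices, since without distinct landing points the opposite-parity-arc trick is unavailable. Once disjointness and distinctness are secured, the parity bookkeeping and the verification that the concatenation repeats no vertex are routine.
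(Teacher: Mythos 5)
Your proof is correct, and its skeleton---splitting into three cases according to $|\{u,v\}\cap V(C)|$ and exploiting the fact that the two arcs of an odd cycle between two distinct vertices have opposite parity---is exactly the paper's. Your chord case and your one-endpoint case coincide with the paper's Cases 3 and 2 (the latter up to swapping the roles of $u$ and $v$). The only real difference is in the main case, where neither endpoint lies on $C$: the paper subdivides $e$ and an arbitrary edge $f\in E(C)$ with new vertices, observes that the subdivided graph is still $2$-connected, takes two internally disjoint paths between the two new vertices, and then truncates the resulting disjoint paths at their first intersections with $C$; you instead invoke the set version of Menger's theorem with $A=\{u,v\}$ and $B=V(C)$, after verifying that no single vertex can separate $A$ from $B$ (using $|V(C)|\geq 3$). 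Your variant is somewhat cleaner: the $A$--$B$ paths come by definition with interiors avoiding $C$ and with automatically distinct endpoints on $C$, whereas the paper's truncation step leaves the distinctness of the landing vertices $u_i \neq v_j$ implicit (it does follow from the full disjointness of the de-subdivided paths, but the paper never says so). Both devices are standard equivalent forms of Menger's theorem, so neither route is more general; the paper's subdivision trick has the minor advantage of quoting a single textbook fact about $2$-connected graphs, while yours eliminates the bookkeeping of first hitting points.
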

\begin{proof}{
We claim that there are two vertex disjoint paths starting from two end points of $e = \{u, v\}$ to two
end points of an arbitrary edge $f \in E(C)$. We add a new vertex on both $e$ and $f$. Clearly,
$G$ remains $2$-connected. Therefore, there are two internally vertex disjoint paths
between these two new vertices \cite[p.161]{wes}. These paths without new edges are the desired paths and the claim is proved.
Call these two paths $P$ and $Q$. Suppose that $P=u {u}_{2} \cdots
{u}_{n}$ and $Q=v {v}_{2} \cdots {v}_{m}$. Let ${u}_{i}$ and ${v}_{j}$ be the first vertices of $P$ and $Q$ in $V(P) \cap
V(C)$ and $V(Q) \cap V(C)$, respectively. Consider two paths $P' =u {u}_{2} \cdots {u}_{i}$ and $Q' =v {v}_{2}
\cdots {v}_{j}$. Now, we have
three cases:

{\bf Case 1.} $V(C) \cap \{u, v\} = \varnothing$. Suppose
that $P$ and $Q$ are two paths from ${u}_{i}$ to ${v}_{j}$ such that $V(P) \cup V(Q) = V(C)$ and $V(P) \cap V(Q) =
\{{u}_{i}, {v}_{j}\}$. Since $C$ is an odd cycle the parity of $P$ and $Q$ are different.
Hence one of the cycles $eP'PQ'$ and $eP'QQ'$ is an even cycle, as desired.

{\bf Case 2.} $V(C) \cap \{u, v\} = \{v\}$. Since $G$ is $2$-connected there exists a shortest path from $u$ to $C$
not containing $v$. We call this path by $S$ and let $V(C) \cap V(S)=\{s\}$.
Suppose that $P$ and $Q$ are two paths from $s$ to $v$, $V(P) \cup V(Q) = V(C)$ and
$V(P) \cap V(Q) = \{s,v\}$. Since $C$ is an odd cycle the parity of $P$ and $Q$
are different.
Hence one of the cycles $eSP$ and $eSQ$ is even.

{\bf Case 3.} The edge $e$ is a chord of $C$. Suppose that $P$ and $Q$ are two paths
from $u$ to $v$ such that $V(P) \cup V(Q) = V(C)$ and $V(P) \cap V(Q) = \{u,v\}$. Since
$C$ is an odd cycle so one of the cycles $eP$ and $eQ$ is even. The proof is
complete.
}
\end{proof}

{Now, we have the following corollary.}

\begin{cor} \label {corthm0}{
Let $G$ be a $2$-connected graph. If removing of every edge of $G$ does not make the graph bipartite, then every edge of $G$ is contained in an even cycle.
}
\end{cor}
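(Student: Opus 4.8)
The plan is to reduce the statement directly to Theorem~\ref{thm0}. Fix an arbitrary edge $e \in E(G)$; the goal is to produce an odd cycle of $G$ that avoids $e$, so that Theorem~\ref{thm0} becomes applicable to $e$. Since $G$ is already assumed $2$-connected, this is the only missing ingredient.

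The key observation is that the hypothesis ``removing every edge of $G$ does not make the graph bipartite'' means precisely that the graph $G - e$ is non-bipartite for each edge $e$. A graph is non-bipartite if and only if it contains an odd cycle, so $G - e$ contains some odd cycle $C$. Because $C$ is a cycle in $G - e$, it cannot use the edge $e$, and therefore $C$ is an odd cycle of $G$ with $e \in E(G) \setminus E(C)$.

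Now I would simply invoke Theorem~\ref{thm0}: the graph $G$ is $2$-connected and $C$ is an odd cycle in $G$, so every edge of $E(G) \setminus E(C)$ is contained in an even cycle; in particular $e$ is. Since $e$ was an arbitrary edge, every edge of $G$ lies in an even cycle, as claimed.

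The main (and essentially only) obstacle is the small observation that the odd cycle guaranteed by the non-bipartiteness of $G - e$ automatically avoids $e$; once this is noted, the corollary follows immediately. In particular, no case analysis on the position of $e$ relative to $C$ is needed here, since that analysis is already carried out inside the proof of Theorem~\ref{thm0}.
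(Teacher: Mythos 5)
Your proof is correct and matches the paper's intent exactly: the paper states this corollary as an immediate consequence of Theorem~\ref{thm0} without writing out a proof, and the derivation you give (for each edge $e$, non-bipartiteness of $G-e$ yields an odd cycle avoiding $e$, so Theorem~\ref{thm0} applies) is precisely the intended argument.
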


\begin{thm}\label{cor1}{ Let $G$ be a $2$-connected non-bipartite graph. Then
every edge of $G$ is contained in an odd cycle.}
\end{thm}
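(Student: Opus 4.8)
The plan is to reduce the statement to the construction already carried out in the proof of Theorem~\ref{thm0}. Since $G$ is non-bipartite it contains an odd cycle $C$, and it suffices to produce, for an arbitrary edge $e=\{u,v\}$, an odd cycle through $e$. If $e\in E(C)$ there is nothing to do, because $C$ itself is an odd cycle containing $e$. So I would assume from now on that $e\in E(G)\setminus E(C)$, which is precisely the hypothesis under which Theorem~\ref{thm0} operates.

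The key observation is that the argument of Theorem~\ref{thm0} does not merely yield a single even cycle: in each of its three cases it actually exhibits \emph{two} cycles through $e$ --- namely $eP'PQ'$ and $eP'QQ'$ in Case~1, $eSP$ and $eSQ$ in Case~2, and $eP$ and $eQ$ in Case~3 --- whose lengths differ in parity. This parity difference comes entirely from the fact that the two vertices at which the paths issuing from $u$ and $v$ first meet $C$ split the odd cycle $C$ into two arcs of opposite parity. Theorem~\ref{thm0} then selects the even member of each pair; for the present statement I would instead select the odd member. This produces an odd cycle containing $e$ and completes the argument.

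Because the substantive work --- finding two internally vertex disjoint paths from the ends of $e$ to the ends of a chosen edge $f\in E(C)$ via the $2$-connectivity of the graph obtained by subdividing $e$ and $f$, together with the ensuing case analysis --- is identical to that of Theorem~\ref{thm0}, there is essentially no new obstacle here. The only points that need care are bookkeeping ones: first, confirming that a suitable odd cycle $C$ exists, which is guaranteed by non-bipartiteness; second, disposing of the easy case $e\in E(C)$, which Theorem~\ref{thm0} excluded but which is immediate in the present setting; and third, re-checking in each of the three cases that the two produced cycles genuinely have opposite parity, so that an odd representative is always available to be chosen.
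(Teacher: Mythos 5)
Your proposal is correct and follows exactly the paper's route: the paper likewise takes an odd cycle $C$, handles $e\in E(C)$ trivially, and otherwise reuses the construction of Theorem~\ref{thm0}, selecting the odd member of the parity-opposite pair of cycles rather than the even one. In fact, your write-up makes explicit the parity bookkeeping that the paper leaves implicit with the phrase ``the proof is similar to Theorem~\ref{thm0}.''
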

\begin{proof}{ Since $G$ is non-bipartite so it has at least one odd cycle. Let
$C$ be an odd cycle and $e \in E(G)$. If $e \in E(C)$, then we are done. If $e \notin E(C)$, then the proof is similar to Theorem
$\ref{thm0}$ and $e$ is contained in an odd cycle. The proof is complete.}
\end{proof}

\begin{remark}{\rm{ The $2$-connectivity condition in Theorem $\ref{cor1}$ is not superfluous, as shown in Figure $1$. The graph in Figure $1$ is non-bipartite but this graph is not $2$-connected. The edge $e$ is not contained in an odd cycle.}}
\end{remark}

\begin{figure} [!htb]

\centering
\includegraphics{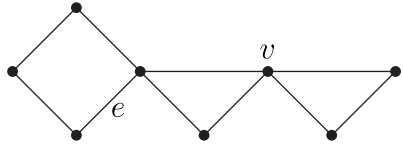}
\caption{The edge $e$ is not contained in an odd cycle. The vertex $v$ is not contained in an even cycle.}
\label{fig:graph1}
\end{figure}

{By Theorem \ref{thm0}, we have the following result.}

\begin{thm}\label{thm1}{ Let $G$ be a $2$-connected graph and $k \geq
3$ be a positive integer. If all vertices of $G$ have degree divisible by $k$, then every edge of $G$ is contained in an even cycle.}
\end{thm}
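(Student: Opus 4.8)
The plan is to deduce this from Theorem~\ref{thm0}. Fix an arbitrary edge $e=\{u,v\}$ of $G$ and distinguish two cases according to whether $G$ is bipartite. If $G$ is bipartite, then every cycle of $G$ is even; since $G$ is $2$-connected every edge lies on a cycle, so $e$ already lies on an even cycle and there is nothing more to prove. The interesting case is when $G$ is non-bipartite: here I would try to produce an odd cycle $C$ of $G$ with $e\notin E(C)$, because then $e\in E(G)\setminus E(C)$ and Theorem~\ref{thm0} immediately places $e$ on an even cycle. Finding such a $C$ amounts to showing that $G-e$ is still non-bipartite, since an odd cycle in $G-e$ is exactly an odd cycle of $G$ avoiding $e$.

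So the whole argument reduces to one claim: \emph{if $G$ is non-bipartite, then $G-e$ is non-bipartite for every edge $e$.} Assume for contradiction that $G-e$ is bipartite, with parts $A$ and $B$. Because $G=(G-e)+e$ is non-bipartite, adding $e$ back must create an odd cycle, which forces the two endpoints $u,v$ of $e$ to lie in the same part; say $u,v\in A$.

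This is where the divisibility hypothesis does its work, through a degree-counting argument modulo $k$. In the bipartite graph $G-e$ every edge joins $A$ to $B$, so $|E(G-e)|=\sum_{a\in A}d_{G-e}(a)=\sum_{b\in B}d_{G-e}(b)$. No endpoint of $e$ lies in $B$, so $d_{G-e}(b)=d_G(b)\equiv 0\pmod{k}$ for each $b\in B$, giving $|E(G-e)|\equiv 0\pmod{k}$. On the other hand $u$ and $v$ each lose exactly one unit of degree when $e$ is deleted, so $\sum_{a\in A}d_{G-e}(a)=\big(\sum_{a\in A}d_G(a)\big)-2\equiv -2\pmod{k}$. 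Comparing the two expressions for $|E(G-e)|$ yields $k\mid 2$, which is impossible for $k\ge 3$. This contradiction proves the claim, and combining it with the bipartite case and Theorem~\ref{thm0} finishes the proof.

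I expect the main obstacle to be recognizing the right reduction — that the only edges not handled directly by Theorem~\ref{thm0} are those whose deletion destroys \emph{all} odd cycles — and then seeing that the hypothesis ``every degree is divisible by $k\ge 3$'' is exactly strong enough to rule this out via the mod-$k$ count; the constant $2$ produced by deleting a single edge is what makes $k\ge 3$ (rather than merely $k\ge 2$) the decisive threshold.
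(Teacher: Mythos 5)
Your proposal is correct and follows essentially the same route as the paper: reduce to Theorem~\ref{thm0} by finding an odd cycle avoiding $e$, and rule out the bad case (where deleting $e$ makes $G$ bipartite) by the mod-$k$ degree count over the two parts, with both endpoints of $e$ in the same part producing the discrepancy $-2\not\equiv 0\pmod{k}$. The only cosmetic difference is that the paper phrases the bad case as ``every odd cycle of $G$ contains $e$'' and uses such a cycle to place $u,v$ in the same part, whereas you argue directly that a bipartition of $G-e$ with $u,v$ in different parts would make $G$ itself bipartite; the counting argument is identical.
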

\begin{proof}{ First assume that $G$ is bipartite. Since $G$ is
$2$-connected so every edge of $G$ is contained in an even cycle \cite[p.162]{wes}. Next,
suppose that $G$ contains an odd cycle, say $C$. Let $e = \{u, v\} \in E(G)$. If there exists an odd cycle not containing $e$, then by Theorem $\ref{thm0}$, $e$ is contained in an even cycle. Thus assume that every odd cycle of $G$ contains
$e$.
Obviously, $H = G \setminus e$ is a bipartite graph. Since $e \in E(C)$, there is a path of even length between $u$ and
$v$ in $H = (X,Y)$. Clearly, $u$ and $v$ are in the same part of $H$, say $X$. Since
${d}_{H}(u) \equiv {d}_{H}(v) \equiv -1 \pmod{k}$, so the sum of vertex degrees in $X$ is $-2$ $\pmod{k}$, but the sum of vertex degrees in $Y$ is $0$ $\pmod{k}$, a
contradiction. 
The proof is complete.}
\end{proof}

\begin{cor} \label {corthm1}{Let $G$ be a $2$-connected $k$-regular graph, $k \geq
3$. Then every edge of $G$ is contained in an even cycle.}
\end{cor}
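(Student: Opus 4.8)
The plan is to obtain this corollary as an immediate specialization of Theorem \ref{thm1}, since the hypotheses of that theorem are satisfied in an essentially trivial way for a regular graph. The only thing I need to check is that a $k$-regular graph meets the divisibility condition required there.

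First I would observe that $G$ is $2$-connected by assumption, so the connectivity hypothesis of Theorem \ref{thm1} holds verbatim. Next, since $G$ is $k$-regular, every vertex $v \in V(G)$ has ${d}_{G}(v) = k$, and $k$ is (trivially) divisible by $k$. Thus all vertex degrees of $G$ are divisible by $k$, and we are given $k \geq 3$, so every assumption of Theorem \ref{thm1} is in force. Applying Theorem \ref{thm1} directly yields that every edge of $G$ is contained in an even cycle, which is the assertion of the corollary.

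The honest remark here is that there is no real obstacle: all of the combinatorial content lives in Theorem \ref{thm1}, whose proof splits according to whether $G$ is bipartite (where one cites that a $2$-connected bipartite graph has every edge on an even cycle) or contains an odd cycle (where, after deleting the unique ``obstructing'' edge $e$, a parity count on the two sides $X$ and $Y$ of the resulting bipartite graph produces a contradiction via the congruences ${d}_{H}(u) \equiv {d}_{H}(v) \equiv -1 \pmod{k}$). Once that theorem is available, the regular case is simply the observation that constant degree $k$ is divisible by $k$, so the corollary reduces to a one-line invocation.

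\begin{prooff}{This is immediate from Theorem \ref{thm1}: a $k$-regular graph with $k \geq 3$ has every vertex of degree $k$, which is divisible by $k$, so the hypotheses of Theorem \ref{thm1} are satisfied and every edge of $G$ lies on an even cycle.}
\end{prooff}
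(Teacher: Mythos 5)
Your proposal is correct and matches the paper's intent exactly: Corollary \ref{corthm1} is stated immediately after Theorem \ref{thm1} precisely because a $k$-regular graph trivially satisfies the divisibility hypothesis (every degree equals $k$), so the corollary is a one-line specialization, just as you argue.
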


\begin{remark}{\rm{ The divisibility condition in Theorem $\ref{thm1}$ is required.
To see this, look at the Figure $2$. The edge $e$ of the following $2$-connected graph is not contained in an even cycle.}}
\end{remark}

\begin{figure} [!htb]

\centering
\includegraphics{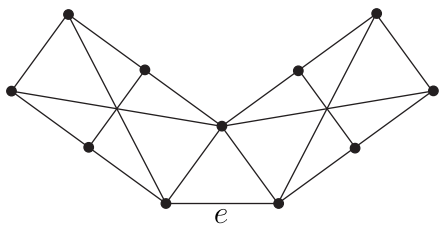}
\caption{The edge $e$ is not contained in an even cycle.}
\label{fig:graph1}
\end{figure}

\begin{remark}{\rm{ The $2$-connectivity condition in Theorem $\ref{thm1}$ is not superfluous.
To see this, look at the Figure $3$. In this figure an edge between a component and a vertex
means that all vertices of the component are adjacent to the vertex. Also an edge
between two components means all vertices of these components are adjacent. The graph in Figure $3$ is $k$-regular but this graph is not $2$-connected. Obviously, the edge $e$ is not contained in an even cycle.}}
\end{remark}

\begin{figure}[!htb]
\centering
\includegraphics{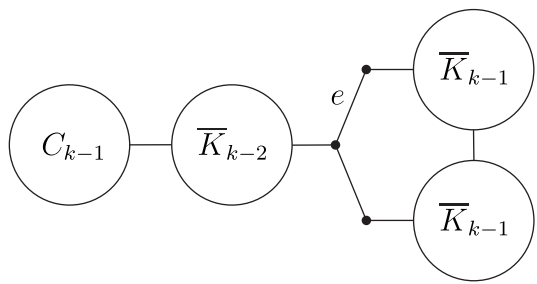}
\caption{A connected graph containing an edge which is not contained in an even cycle. 
}
\label{fig:graph2}
\end{figure}

\begin{thm}\label{thm2}{ Let $G$ be a $2$-connected graph, $v \in V(G)$ and $d(v) \geq 3$. Then $v$ is contained in an even cycle. }
\end{thm}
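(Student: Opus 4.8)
The plan is to dispose of the bipartite case directly and to reduce the non-bipartite case to Theorem \ref{thm0}, using the degree hypothesis only through a simple counting observation.

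First I would handle the case where $G$ is bipartite. Since $G$ is $2$-connected, every edge of $G$ lies on a cycle, so in particular the vertex $v$ (which has degree at least $3 \geq 2$) lies on some cycle $C$. Because $G$ is bipartite, $C$ has even length, and hence $v$ is already contained in an even cycle. No further work is needed in this case.

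Next I would treat the case where $G$ is non-bipartite, so that $G$ contains an odd cycle $C$. The key observation is that any cycle uses at most two edges incident to a given vertex: if $v \in V(C)$ then exactly two edges of $C$ are incident to $v$, and if $v \notin V(C)$ then none are. Since $d(v) \geq 3$, there must be at least one edge $e$ incident to $v$ with $e \in E(G) \setminus E(C)$. Applying Theorem \ref{thm0} to the odd cycle $C$ and the edge $e$, I conclude that $e$ lies on an even cycle; as $e$ is incident to $v$, this even cycle passes through $v$, which finishes the proof.

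The only genuine obstacle is guaranteeing an edge at $v$ that lies outside the chosen odd cycle, and this is exactly the pigeonhole step that the hypothesis $d(v) \geq 3$ is designed to supply; once such an edge is found, the previously established Theorem \ref{thm0} does all the remaining work, and the standard $2$-connectivity fact that every vertex lies on a cycle handles the bipartite case. I expect the write-up to be short, with essentially no computation beyond this parity-and-counting argument.
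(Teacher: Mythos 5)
Your proof is correct and follows essentially the same route as the paper: both arguments produce an odd cycle missing some edge $e$ incident to $v$ (via the pigeonhole fact that a cycle uses at most two edges at $v$, so $d(v)\geq 3$ yields such an $e$) and then invoke Theorem \ref{thm0} to get an even cycle through $e$, hence through $v$. The only cosmetic difference is the case split: the paper starts with a cycle through $v$ (guaranteed by $2$-connectivity) and cases on its parity, while you case on whether $G$ is bipartite and take an arbitrary odd cycle in the non-bipartite case; the two versions are interchangeable.
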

\begin{proof}{ Since $G$ is $2$-connected,
$v$ is contained in a cycle. Call this cycle by $C$. If $C$ is an even cycle, then we
are done.
Thus assume that $C$ is an odd cycle. Since $d(v) \geq 3$, $v$ is incident
with another edge $e$, not contained in $C$. Since $e
\notin E(C)$, so by Theorem $\ref{thm0}$, $e$ is contained in an even cycle.}
\end{proof}

\begin{remark}{\rm{ The $2$-connectivity condition in Theorem $\ref{thm2}$ is not superfluous.
To see this, look at the Figure $1$.}}
\end{remark}

\begin{thm}\label{thm3}{ Let $G$ be a $2$-edge connected graph and $v \in V(G)$.
If $d(v)$ is odd, then there is an even cycle containing $v$.}
\end{thm}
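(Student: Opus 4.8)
The plan is to reduce the statement to Theorem \ref{thm2} by passing to the block decomposition of $G$. The key structural observation is that since $G$ is $2$-edge connected it contains no bridge, and therefore every block of $G$ (a maximal connected subgraph having no cut vertex) is genuinely $2$-connected rather than a single edge. So I would begin by recording this fact: if some block were a single edge, that edge would be a bridge, contradicting $2$-edge connectedness, hence each block is $2$-connected and in particular has at least three vertices.

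Next I would invoke the standard property that the edge set of $G$ is partitioned among its blocks, which two edges incident to a common vertex share iff they lie on a common cycle. This gives the degree-sum identity $d_G(v) = \sum_{B \ni v} d_B(v)$, where the sum ranges over all blocks $B$ containing $v$. Because each such block is $2$-connected, every summand satisfies $d_B(v) \geq 2$.

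Now the parity argument finishes the reduction. Since $d_G(v)$ is odd, at least one of the summands $d_B(v)$ must be odd; I would fix such a block $B$. Being odd and at least $2$, this forces $d_B(v) \geq 3$. Thus $B$ is a $2$-connected graph containing $v$ with $d_B(v) \geq 3$, so Theorem \ref{thm2} applies to $B$ and produces an even cycle of $B$ through $v$. Since $B$ is a subgraph of $G$, this is an even cycle of $G$ containing $v$, as required.

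The step I expect to carry the real weight is the structural claim that in a $2$-edge connected graph every block is $2$-connected, together with the degree-sum identity over blocks: both are standard, but the whole argument hinges on them, and especially on the impossibility of a single-edge (bridge) block. Once these are in place the remaining parity observation — that an odd total degree forces a block-degree that is odd, hence at least $3$ — is immediate, and the appeal to Theorem \ref{thm2} is routine.
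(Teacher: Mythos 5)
Your proposal is correct, but it takes a genuinely different route from the paper. The paper's proof is direct and self-contained: it observes that every connected component of $G \setminus v$ must contain at least two neighbours of $v$ (otherwise the single connecting edge would be a bridge), so oddness of $d(v)$ forces some component $H$ to contain three neighbours $x$, $y$, $z$ of $v$. It then builds a theta-shaped configuration inside $H$ --- a shortest path $M$ between the closest pair, say $\{x,y\}$, and a shortest path $N$ from $z$ to $M$, meeting $M$ at a vertex $w$ that splits $M$ into subpaths $P$ and $Q$ --- and notes that the three resulting cycles through $v$ satisfy $l({e}_{vx}PN{e}_{vz}) = l({e}_{vx}M{e}_{vy}) + l({e}_{vy}QN{e}_{vz}) - 2l({e}_{vy}Q)$, so they cannot all be odd. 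Your argument instead localizes the problem to a block: since $G$ is bridgeless, no block is a single edge, so every block is $2$-connected; the partition of $E(G)$ among blocks gives $d_G(v) = \sum_{B \ni v} d_B(v)$ with every summand at least $2$; and an odd total forces some $d_B(v)$ to be odd, hence at least $3$, at which point Theorem \ref{thm2} applied to $B$ produces the even cycle, which survives unchanged in $G$. Both arguments are sound, and each step of yours (bridgeless implies all blocks $2$-connected, the block degree-sum identity, minimum degree $2$ in a $2$-connected graph) is standard and correctly deployed. Yours is shorter and more modular --- it exhibits Theorem \ref{thm3} as essentially a corollary of Theorem \ref{thm2}, and thus ultimately of Theorem \ref{thm0} --- at the price of importing block-decomposition machinery that the paper never invokes; the paper's proof costs more explicit work but is constructive, staying entirely within elementary path arguments and exhibiting the even cycle directly rather than through two layers of reduction.
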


\begin{proof}{We claim that $G \setminus v$ has a connected
component $H$ such that $|V(H) \cap N(v)| \geq 3$. Since $G$
is $2$-edge connected, in each connected component of $G \setminus v$ there are
at least two neighbors of $v$. Now, assume that for each connected component
of $G$, say $H$, $|V(H) \cap N(v)| = 2$. This implies that $d(v)$ is even, a
contradiction. Thus, there exists a connected component of $G \setminus v$, say
$H$, such that $|V(H) \cap N(v)| \geq 3$.

Let $\{x,y,z\} \subseteq V(H) \cap N(v)$.
With no loss of generality assume that $\{x, y\}$ has minimum distance among all pairs of $\{x, y, z\}$ in $G \setminus v$. Suppose that $M$ is a path of minimum length
between $x$ and $y$. Since $H$ is connected, there exists a shortest path, say $N$, between $z$ and $M$. Let $w \in V(N)
\cap V(M)$ and $P$ and $Q$ are two paths such that $V(P) \cup V(Q) = V(M)$, $V(P) \cap
V(Q) = \{w\}$, $x \in V(P)$ and $y \in V(Q)$. We denote the edges $vx$, $vy$ and $vz$ by ${e}_{vx}$, ${e}_{vy}$ and ${e}_{vz}$, respectively. If two cycles ${e}_{vx}M{e}_{vy}$
and ${e}_{vy}QN{e}_{vz}$ are odd, then $l({e}_{vx}PN{e}_{vz}) = l({e}_{vx}M{e}_{vy}) + l({e}_{vy}QN{e}_{vz}) -
2l({e}_{vy}Q)$ which is even, where $l(R)$ denotes the length of $R$, as desired.}
\end{proof}

\begin{remark}{\rm{ Being odd for $d(v)$ is required in Theorem $\ref{thm3}$.
The graph in Figure $4$ is $2$-edge connected but the degree of $v$ is even and $v$ is contained in no even cycle.}}
\end{remark}

\begin{figure}[!htb]
\centering
\includegraphics{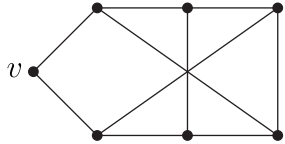}
\caption{The vertex $v$ in the graph has even degree and there is no even cycle containing $v$.}
\label{fig:graph3}
\end{figure}

{Now, we present some results on the existence of even or odd circuit passing a given edge or vertex.}

\begin{thm}\label{thm00}{\ Let $G$ be a $2$-edge connected graph and $C$ be an
odd cycle in $G$. Then every $e \in E(G) \setminus E(C)$ is contained in an even
circuit.}
\end{thm}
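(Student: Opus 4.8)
The plan is to adapt the strategy of Theorem~\ref{thm0} to the weaker hypothesis of $2$-edge connectivity, where we can only guarantee \emph{edge}-disjoint structure rather than vertex-disjoint paths, so the object we produce is a circuit (a closed trail, possibly repeating vertices) rather than a cycle. First I would invoke $2$-edge connectivity in place of Menger's vertex version: after subdividing both $e=\{u,v\}$ and an arbitrary edge $f\in E(C)$ with new vertices, the graph remains $2$-edge connected, so there exist two \emph{edge}-disjoint paths joining the new vertex on $e$ to the new vertex on $f$. Deleting the subdivision edges yields two edge-disjoint walks from $\{u,v\}$ to the endpoints of $f$; since they may share vertices, I treat them as trails rather than internally vertex-disjoint paths.

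The core of the argument is a parity computation analogous to the three cases in Theorem~\ref{thm0}. The key observation is that $C$ is odd, so the two arcs of $C$ joining the two landing points (the endpoints of $f$, or more generally the first vertices where the two trails meet $C$) have opposite parity. Attaching $e$ together with the two connecting trails to one arc or the other of $C$ produces two closed trails whose lengths differ by the parity of $C$; hence one of them is even. I would follow the same case split as before according to how many endpoints of $e$ already lie on $C$ (neither, one, or $e$ a chord), and in each case exhibit two candidate closed trails of opposite parity, selecting the even one.

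The main obstacle is verifying that the objects produced are genuine \emph{circuits}, i.e. closed trails with no repeated edge, rather than cycles. Because the two connecting trails may revisit vertices and may intersect each other and $C$, the naive concatenation could repeat an edge. I would handle this by choosing the two paths to be edge-disjoint (guaranteed by the edge version of Menger), by using arcs of $C$ rather than the whole of $C$, and by taking shortest connecting trails so that no edge is traversed twice; the length identity $l(\text{even circuit}) = l(\text{odd}_1) + l(\text{odd}_2) - 2\,l(\text{shared part})$ from the proof of Theorem~\ref{thm3} shows the resulting closed trail has even length. Since the circuit definition only forbids repeated edges, the relaxation from cycle to circuit is precisely what lets the weaker $2$-edge-connectivity hypothesis suffice, and checking edge-disjointness in each of the three cases is the routine but essential bookkeeping that completes the proof.
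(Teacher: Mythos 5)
Your proposal is correct and is essentially the paper's own argument: the paper proves Theorem \ref{thm00} simply by declaring it ``similar to Theorem \ref{thm0},'' and your version---subdividing $e$ and $f$, invoking the edge form of Menger's theorem to get two edge-disjoint paths, truncating them at their first vertices on $C$, and using the opposite parities of the two arcs of the odd cycle $C$ to select an even closed trail---is exactly the intended adaptation, with the relaxation from cycle to circuit absorbing the vertex repetitions that $2$-edge connectivity cannot exclude. The only detail worth adding is the degenerate case where the two landing vertices on $C$ coincide (impossible in Theorem \ref{thm0} but possible here), where the two ``arcs'' have lengths $0$ and $l(C)$, so your parity argument still goes through unchanged.
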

\begin{proof} {The proof is similar to Theorem $\ref{thm0}$.}
\end{proof}

{Using the method applied in the proof of Theorem \ref{thm1}, one can prove the following result.}

\begin{thm}\label{thm5}{\ Let $G$ be a $2$-edge connected graph and $k \geq
3$ be a positive integer. If all vertices of $G$ have degree divisible by $k$, then every edge of $G$ is contained in an even circuit.}
\end{thm}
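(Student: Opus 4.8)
The plan is to follow the proof of Theorem \ref{thm1} almost verbatim, replacing the role of $2$-connectivity by $2$-edge connectivity, cycles by circuits, and Theorem \ref{thm0} by Theorem \ref{thm00}. First I would dispose of the bipartite case. If $G$ is bipartite, then since $G$ is $2$-edge connected it is bridgeless, so every edge lies on a cycle; as $G$ is bipartite this cycle is even, and an even cycle is in particular an even circuit. This settles every edge at once.

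So assume $G$ contains an odd cycle and fix an arbitrary edge $e = \{u, v\}$. I would split on whether some odd cycle of $G$ avoids $e$. If there is an odd cycle $C$ with $e \notin E(C)$, then Theorem \ref{thm00} applies immediately and places $e$ on an even circuit, and we are done.

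The crux is to rule out the complementary possibility, namely that every odd cycle of $G$ passes through $e$. In that case $H = G \setminus e$ contains no odd cycle, so $H$ is bipartite with parts $X$ and $Y$. Picking any odd cycle $C$ of $G$ (which must then contain $e$), the path $C - e$ is an even $u$--$v$ path lying in $H$, so $u$ and $v$ fall in the same part, say $u, v \in X$. Now the divisibility hypothesis enters exactly as before: deleting $e$ lowers the degrees of $u$ and $v$ by one and leaves all other degrees unchanged, so $d_H(u) \equiv d_H(v) \equiv -1 \pmod{k}$ while $d_H(w) \equiv 0 \pmod{k}$ for every other vertex $w$. Summing degrees over $X$ gives $-2 \pmod{k}$, whereas summing over $Y$ gives $0 \pmod{k}$; since both sums equal $|E(H)|$ in a bipartite graph, this forces $-2 \equiv 0 \pmod{k}$, contradicting $k \geq 3$. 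Hence this case is impossible, and for every edge $e$ some odd cycle avoids it, so Theorem \ref{thm00} always yields an even circuit through $e$.

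The only step genuinely different from Theorem \ref{thm1}, and the one I would be most careful about, is the bipartite base case: there I cannot quote ``$2$-connected implies every edge lies on an even cycle'', but the weaker fact that a $2$-edge connected (hence bridgeless) graph has every edge on a cycle suffices, because bipartiteness upgrades that cycle to an even one. Everything else---the parity computation and the invocation of Theorem \ref{thm00}---carries over unchanged, so I expect no further obstacle.
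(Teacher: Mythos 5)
Your proposal is correct and is exactly what the paper intends: the paper gives no separate proof of this theorem, stating only that it follows ``using the method applied in the proof of Theorem \ref{thm1},'' and your argument is precisely that adaptation (replacing Theorem \ref{thm0} by Theorem \ref{thm00}, and replacing the $2$-connectivity fact in the bipartite case by the observation that a bridgeless graph has every edge on a cycle). Your explicit care with the bipartite base case is the one detail the paper leaves implicit, and you handled it correctly.
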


\begin{remark}{\rm{ The divisibility condition in Theorem $\ref{thm5}$ is
required. To see this, look at the Figure 2.
}}
\end{remark}

{If one applies the idea of the proof of Theorems \ref{cor1} and \ref{thm2}, then the following results hold.}

\begin{thm}\label{thm6} {Let $G$ be a $2$-edge connected graph and $v \in V(G)$.
 If $d(v) \geq 3$, then $v$ is contained in an even circuit. }
\end{thm}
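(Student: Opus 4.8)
The plan is to mirror the argument of Theorem~\ref{thm2} almost verbatim, replacing the appeal to Theorem~\ref{thm0} by its $2$-edge connected analogue Theorem~\ref{thm00}, and everywhere reading ``circuit'' for ``cycle''. First I would note that a $2$-edge connected graph is bridgeless, so every edge lies on a cycle; since $d(v)\ge 3>0$, I may pick any edge incident with $v$, and the cycle $C$ through that edge contains $v$. Because every cycle is in particular a circuit, if $C$ is even then $C$ itself is an even circuit through $v$ and we are done.

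Thus the only case requiring work is when $C$ is odd. Here I would invoke the hypothesis $d(v)\ge 3$: the cycle $C$ uses exactly two of the edges at $v$, so there is a third edge $e$ incident with $v$ with $e\in E(G)\setminus E(C)$. Now $C$ is an odd cycle and $e\notin E(C)$, so Theorem~\ref{thm00} applies and produces an even circuit containing $e$. Since $e$ is incident with $v$, any circuit through $e$ passes through both endpoints of $e$, one of which is $v$; hence $v$ lies on an even circuit, as required.

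I do not expect any genuine obstacle here: this is a routine transfer of the $2$-connected proof of Theorem~\ref{thm2} to the $2$-edge connected setting, made possible because Theorem~\ref{thm00} supplies exactly the tool (an even circuit through a prescribed off-cycle edge) that Theorem~\ref{thm0} supplied in the $2$-connected case. The only points worth verifying carefully are (i) that $v$ does lie on some cycle, which follows from bridgelessness of a $2$-edge connected graph, and (ii) that the even circuit furnished by Theorem~\ref{thm00} genuinely contains $v$, which holds since that circuit meets both endpoints of $e$ and $v$ is one of them.
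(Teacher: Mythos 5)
Your proposal is correct and matches the paper's intended argument exactly: the paper gives no explicit proof of this theorem, stating only that it follows by ``the idea of the proof of Theorem~\ref{thm2}'', and your write-up is precisely that transfer --- take a cycle $C$ through $v$ (using bridgelessness), and if $C$ is odd use $d(v)\ge 3$ to find an edge $e$ at $v$ outside $C$, then apply Theorem~\ref{thm00} in place of Theorem~\ref{thm0} to get an even circuit through $e$, hence through $v$. Your two verification points (that $v$ lies on a cycle, and that the circuit from Theorem~\ref{thm00} really contains $v$) are exactly the right details to check, and both are handled correctly.
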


{We close this paper with the following result.}

\begin{thm}\label{thm7} {Let $G$ be a $2$-edge connected non-bipartite graph.
Then every edge of $G$ is contained in an odd circuit.}
\end{thm}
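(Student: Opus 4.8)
The plan is to mirror the argument of Theorem~\ref{cor1}, but in the $2$-edge connected, circuit setting, using Theorem~\ref{thm00} in place of Theorem~\ref{thm0}. Since $G$ is non-bipartite it contains an odd cycle; fix one and call it $C$. Now let $e \in E(G)$ be arbitrary. If $e \in E(C)$, then $C$ itself is an odd cycle, hence an odd circuit, through $e$, and there is nothing to prove. So the real work is the case $e \notin E(C)$.

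For $e \notin E(C)$ I would run exactly the construction from the proof of Theorem~\ref{thm0} (and Theorem~\ref{thm00}), replacing \emph{internally vertex disjoint paths} by \emph{edge disjoint paths}. Concretely, subdivide both $e$ and some edge $f \in E(C)$ by inserting a new vertex on each; the resulting graph is still $2$-edge connected, so by the edge form of Menger's theorem there are two edge disjoint paths between the two new vertices. Deleting the subdividing edges turns these into two edge disjoint trails $P'$ and $Q'$ joining the endpoints of $e$ to two vertices of $C$, which I truncate at their first vertex on $C$ so that they use no edge of $C$. Splicing $P'$ and $Q'$ to the two arcs of $C$ determined by these landing points produces two closed trails through $e$; their lengths differ exactly by the difference of the two arc-lengths of $C$, and since $|E(C)|$ is odd those two arcs have opposite parity. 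Hence one of the two circuits is even and the other is odd. Theorem~\ref{thm00} kept the even one; here I simply keep the odd one, which contains $e$.

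Exactly as in Theorem~\ref{thm0}, this requires a small case split according to whether $\{u,v\} \cap V(C)$ is empty, meets $C$ in one endpoint, or $e$ is a chord of $C$; in each case the same parity bookkeeping applies and yields an odd circuit. The main point to verify --- and the only place where the edge version genuinely differs from Theorem~\ref{cor1} --- is that the closed walks we assemble are honest \emph{circuits}, i.e.\ no edge is traversed twice. This follows from the edge disjointness of $P'$ and $Q'$ supplied by Menger together with the truncation, which guarantees that neither $P'$ nor $Q'$ shares an edge with $C$ and that $e$ lies on neither; since the vertices may nonetheless repeat, one obtains circuits rather than cycles, which is precisely what the statement asks for. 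I expect this verification of edge-disjointness to be the main (and essentially only) obstacle, and it is handled in the same way as in the proofs of Theorems~\ref{thm0} and~\ref{thm00}.
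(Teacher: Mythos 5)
Your proposal is correct and is precisely the argument the paper intends: the paper gives no written proof of Theorem \ref{thm7}, deferring to the ideas of Theorems \ref{cor1} and \ref{thm00}, and your write-up is exactly that combination (fix an odd cycle $C$, dispose of $e \in E(C)$, otherwise run the subdivide-and-apply-edge-Menger construction of Theorem \ref{thm00} and keep the odd circuit instead of the even one). One degenerate case worth noting explicitly, since it cannot occur in the vertex-disjoint setting of Theorem \ref{thm0} but can occur here: the two edge-disjoint trails may first meet $C$ at the \emph{same} vertex, in which case the two ``arcs'' are the empty path and all of $C$, and your parity bookkeeping (the two circuits differ in length by $|E(C)|$, which is odd) still yields an odd circuit through $e$.
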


\noindent {\small {\sc   Saieed Akbari \quad       {\tt
s\_akbari@sharif.edu}}}\\
\noindent {\small {\sc  Khashayar Etemadi \quad       {\tt
etemadi@ce.sharif.edu}}}\\
\noindent {\small {\sc  Peyman Ezzati \quad       {\tt
pezzati@ce.sharif.edu}}}\\
\noindent {\small {\sc  Mehrdad Ghadiri \quad       {\tt
ghadiri@ce.sharif.edu}}}\\
\end{document}